  \newtheorem{thm}{Theorem}
  \newtheorem{cor}{Corollary}
  \newtheorem{remrk}{Remark}
\begin{document}

\title{Unequal dimensional small balls and quantization on Grassmann Manifolds$^{^{*}}$}

\author{Wei Dai$^{\dagger}$, Brian Rider$^{\dagger\dagger}$ and Youjian(Eugene)
Liu$^{\dagger}$\\
$^{\dagger}$Department of Electrical and Computer Engineering, $^{\dagger\dagger}$Department
of Mathematics\\
 University of Colorado at Boulder\\
\{wei.dai, brian.rider\}@colorado.edu, eugeneliu@ieee.org}

\maketitle

\begin{abstract}
The Grassmann manifold $\mathcal{G}_{n,p}\left(\mathbb{L}\right)$
is the set of all $p$-dimensional planes (through the origin) in
the $n$-dimensional Euclidean space $\mathbb{L}^{n}$, where $\mathbb{L}$
is either $\mathbb{R}$ or $\mathbb{C}$. This paper considers an
unequal dimensional quantization in which a source in $\mathcal{G}_{n,p}\left(\mathbb{L}\right)$
is quantized through a code in $\mathcal{G}_{n,q}\left(\mathbb{L}\right)$,
where $p$ and $q$ are not necessarily the same. It is different
from most works in literature where $p\equiv q$. The analysis for
unequal dimensional quantization is based on the volume of a metric
ball in $\mathcal{G}_{n,p}\left(\mathbb{L}\right)$ whose center is
in $\mathcal{G}_{n,q}\left(\mathbb{L}\right)$. Our chief result is
a closed-form formula for the volume of a metric ball when the radius
is sufficiently small. This volume formula holds for Grassmann manifolds
with arbitrary $n$, $p$, $q$ and $\mathbb{L}$, while previous
results pertained only to some special cases. Based on this volume
formula, several bounds are derived for the rate distortion tradeoff
assuming the quantization rate is sufficiently high. The lower and
upper bounds on the distortion rate function are asymptotically identical,
and so precisely quantify the asymptotic rate distortion tradeoff.
We also show that random codes are asymptotically optimal in the sense
that they achieve the minimum achievable distortion with probability
one as $n$ and the code rate approach infinity linearly.

Finally, we discuss some applications of the derived results to communication
theory. A geometric interpretation in the Grassmann manifold is developed
for capacity calculation of additive white Gaussian noise channel.
Further, the derived distortion rate function is beneficial to characterizing
the effect of beamforming matrix selection in multi-antenna communications.
\end{abstract}
\begin{keywords}
the Grassmann manifold, rate distortion tradeoff, channel capacity,
beamforming, MIMO communications
\end{keywords}
\renewcommand{\thefootnote}{\fnsymbol{footnote}} \footnotetext[1]{This work is supported by the NSF DMS-0508680, Thomson Inc., and  the Junior Faculty Development Award, University of Colorado at Boulder. Part of this work was published in  \cite{Dai_Globecom05_Quantization_bounds_Grassmann_manifold}. This paper extends the quantization in  \cite{Dai_Globecom05_Quantization_bounds_Grassmann_manifold} to the  unequal dimensional case.} \renewcommand{\thefootnote}{\arabic{footnote}} \setcounter{footnote}{0}

\section{Introduction\label{sec:Introduction}}

%Introduction

The \emph{Grassmann manifold} $\mathcal{G}_{n,p}\left(\mathbb{L}\right)$
is the set of all $p$-dimensional planes (through the origin) in
the $n$-dimensional Euclidean space $\mathbb{L}^{n}$, where $\mathbb{L}$
is either $\mathbb{R}$ or $\mathbb{C}$. It forms a compact Riemann
manifold of real dimension $\beta p\left(n-p\right)$, where $\beta=1$
when $\mathbb{L}=\mathbb{R}$ and $\beta=2$ when $\mathbb{L}=\mathbb{C}$.
The Grassmann manifold is a useful analysis tool for multi-antenna
communications (also known as multiple-input multiple-output (MIMO)
communication systems). The capacity of non-coherent MIMO systems
at high signal-to-noise ratio (SNR) region was derived by analysis
in the Grassmann manifold \cite{Tse_IT02_Communications_on_Grassmann_manifold}.
The well known spherical codes for MIMO systems can be viewed as codes
in the Grassmann manifold \cite{Urbanke_IT01_Signal_Constellations}.
Further, for coherent MIMO systems with finite rate feedback, the
quantization of eigen-channel vectors is related to the quantization
on the Grassmann manifold \cite{Sabharwal_IT03_Beamforming_MIMO,Love_IT03_Grassman_Beamforming_MIMO,Dai_05_Power_onoff_strategy_design_finite_rate_feedback,Rao_SP07_Feedback_High_Resolution,Jindal_ISIT2006_Feedback_Reduction_MIMO_BC}. 

This paper studies unequal dimensional quantization on the Grassmann
manifold. Roughly speaking, a quantization is a representation of
a source: it maps an element in $\mathcal{G}_{n,p}\left(\mathbb{L}\right)$
(the source) into a subset $\mathcal{C}\subset\mathcal{G}_{n,q}\left(\mathbb{L}\right)$,
which is often discrete and referred to as a \emph{code}. While it
is traditionally assumed that $p\equiv q$ \cite{Conway_96_PackingLinesPlanes,Barg_IT02_Bounds_Grassmann_Manifold,Henkel_IT2005_Sphere_Packing_Bounds_Grassmann_Stiefel,Dai_Globecom05_Quantization_bounds_Grassmann_manifold},
we are interested in a more general case where $p$ may not necessarily
equal to $q$; thus the term unequal dimensional quantization. The
performance limit of quantization is given by the so called rate distortion
tradeoff. Let the source be randomly distributed and define a \emph{distortion}
metric between elements in $\mathcal{G}_{n,p}\left(\mathbb{L}\right)$
and $\mathcal{G}_{n,q}\left(\mathbb{L}\right)$. The rate distortion
tradeoff is described by the minimum average distortion achievable
for a given code size, or equivalently the minimum code size required
to achieve a particular average distortion. This paper will quantify
the rate distortion tradeoff for unequal dimensional quantization.

This paper appears to be the first to explore  unequal dimensional
quantization systematically. According to the authors' knowledge,
works in literature assume that $p=q$: The Rankin bound in $\mathcal{G}_{n,p}\left(\mathbb{R}\right)$
is obtained in \cite{Conway_96_PackingLinesPlanes} when the code
size is large. When $p$ is fixed and $n$ is asymptotically large,
approximations to the Gilbert-Varshamov and Hamming bounds on $\mathcal{G}_{n,p}\left(\mathbb{L}\right)$
are drived by Laplace method in \cite{Barg_IT02_Bounds_Grassmann_Manifold}
and by volume estimates in \cite{Henkel_IT2005_Sphere_Packing_Bounds_Grassmann_Stiefel,Han_IT2006_Unitary_Space_Time_Diversity_Analysis}.
The distortion rate tradeoff for the $p=1$ case is quantified in
\cite{Sabharwal_IT03_Beamforming_MIMO,Love_IT03_Grassman_Beamforming_MIMO}
by direct volume calculation and in \cite{Rao_SP07_Feedback_High_Resolution}
using high resolution quantization theory. Our paper \cite{Dai_Globecom05_Quantization_bounds_Grassmann_manifold}
characterizes the tradeoff for the general $p$ case when quantization
rate is sufficiently high. While the $p=q$ case has been extensively
studied, unequal dimensional quantization does arise in some multi-antenna
communication systems, see \cite{Jindal_ISIT2006_Feedback_Reduction_MIMO_BC}
for an example. It is thus worthwhile to go beyond the $p=q$ case.

The main contribution of this paper is to derive a closed-form formula
for the volume of a small ball in the Grassmann manifold and then
accurately quantify the rate distortion tradeoff accordingly. Specifically:

\begin{enumerate}
\item An explicit volume formula for a metric ball is derived for arbitrary
$n$, $p$, $q$ and $\mathbb{L}$ when the radius $\delta$ is sufficiently
small. Useful lower and upper bounds on the volume are also presented. 
\item Tight lower and upper bounds are derived for the rate distortion tradeoff.
Further, fix $p$ and $q$ but let $n$ and the code rate (logarithm
of the code size) approach infinity linearly. The lower and upper
bounds are in fact asymptotically identical, and so precisely quantify
the asymptotic rate distortion tradeoff. We also show that random
codes are asymptotically optimal in the sense that they achieve the
minimum achievable distortion with probability one in this  asymptotic
region.
\end{enumerate}
Finally, some applications of the derived results to communication
theory are presented. We show that data transmission in additive white
Gaussian noise (AWGN) channel is essentially communication on the
Grassmann manifold. A geometric interpretation for AWGN channel is
developed in the Grassmann manifold accordingly. Moreover, the beamforming
matrix selection in a MIMO system is closely related to quantization
on the Grassmann manifold. The results for the distortion rate tradeoff
are therefore helpful to characterize the effect of beamforming matrix
selection.

\section{\label{sec:Preliminaries}Preliminaries}

For the sake of applications \cite{Sabharwal_IT03_Beamforming_MIMO,Love_IT03_Grassman_Beamforming_MIMO,Dai_05_Power_onoff_strategy_design_finite_rate_feedback},
the projection Frobenius metric (\emph{chordal distance}) and the
invariant measure on the Grassmann manifold are employed throughout
this paper. Without loss of generality, we assume that $p\le q$.
For any two planes $P\in\mathcal{G}_{n,p}\left(\mathbb{L}\right)$
and $Q\in\mathcal{G}_{n,q}\left(\mathbb{L}\right)$, we define the
principle angles and the chordal distance between $P$ and $Q$ as
follows. Let $\mathbf{p}_{1}\in P$ and $\mathbf{q}_{1}\in Q$ be
the unit vectors such that $\left|\mathbf{p}_{1}^{\dagger}\mathbf{q}_{1}\right|$
is maximal. Inductively, let $\mathbf{p}_{i}\in P$ and $\mathbf{q}_{i}\in Q$
be the unit vectors such that $\mathbf{p}_{i}^{\dagger}\mathbf{p}_{j}=0$
and $\mathbf{q}_{i}^{\dagger}\mathbf{q}_{j}=0$ for all $1\leq j<i$
and $\left|\mathbf{p}_{i}^{\dagger}\mathbf{q}_{i}\right|$ is maximal.
The principle angles are then defined as $\theta_{i}=\arccos\left|\mathbf{p}_{i}^{\dagger}\mathbf{q}_{i}\right|$
for $i=1,\cdots,p$ \cite{Conway_96_PackingLinesPlanes}, and the
chordal distance between $P$ and $Q$ is then given by \begin{equation}
d_{c}\left(P,Q\right)\triangleq\sqrt{\sum_{i=1}^{p}\sin^{2}\theta_{i}}.\label{eq:def-dc}\end{equation}
The invariant measure $\mu$ on $\mathcal{G}_{n,p}\left(\mathbb{L}\right)$
is the Haar measure on $\mathcal{G}_{n,p}\left(\mathbb{L}\right)$.
Let $O\left(n\right)$ and $U\left(n\right)$ be the groups of $n\times n$
orthogonal and unitary matrices respectively. Let $\mathbf{A},\mathbf{B}\in O\left(n\right)$
when $\mathbb{L}=\mathbb{R}$, or $\mathbf{A},\mathbf{B}\in U\left(n\right)$
when $\mathbb{L}=\mathbb{C}$. For any measurable set $\mathcal{M}\subset\mathcal{G}_{n,p}\left(\mathbb{L}\right)$
and arbitrary $\mathbf{A}$ and $\mathbf{B}$, $\mu$ satisfies \[
\mu\left(\mathbf{A}\mathcal{M}\right)=\mu\left(\mathcal{M}\right)=\mu\left(\mathcal{M}\mathbf{B}\right).\]
The invariant measure defines the uniform/isotropic distribution on
$\mathcal{G}_{n,p}\left(\mathbb{L}\right)$ \cite{James_54_Normal_Multivariate_Analysis_Orthogonal_Group}.

This paper addresses an unequal dimensional quantization problem.
Let $\mathcal{C}$ be a finite size discrete subset of $\mathcal{G}_{n,q}\left(\mathbb{L}\right)$
(also known as a code). An unequal dimensional quantization is a mapping
from the $\mathcal{G}_{n,p}\left(\mathbb{L}\right)$ to the set $\mathcal{C}$,
$\mathfrak{q}:\mathcal{G}_{n,p}\left(\mathbb{L}\right)\rightarrow\mathcal{C}$,
where $p$ and $q$ are not necessarily the same integer. Without
loss of generality, we assume $p\le q$. We are interested in quantifying
the \emph{rate distortion tradeoff}. Assume that a source $P\in\mathcal{G}_{n,p}\left(\mathbb{L}\right)$
is isotropically distributed. Define the distortion measure as the
square of the chordal distance $d_{c}^{2}\left(\cdot,\cdot\right)$.
Then the distortion associated with a quantization $\mathfrak{q}$
is defined as $D\triangleq\mathrm{E}_{P}\left[d_{c}^{2}\left(P,\mathfrak{q}\left(P\right)\right)\right].$
For a given code $\mathcal{C}\subset\mathcal{G}_{n,q}\left(\mathbb{L}\right)$,
the optimal quantization to minimize the distortion is given by $\mathfrak{q}\left(P\right)=\arg\;\underset{Q\in\mathcal{C}}{\min}\; d_{c}\left(P,Q\right).$
The corresponding distortion is \[
D\left(\mathcal{C}\right)=\mathrm{E}_{P}\left[\underset{Q\in\mathcal{C}}{\min}\; d_{c}^{2}\left(P,Q\right)\right].\]
 The rate distortion tradeoff can be described by the \emph{distortion
rate function}: the infimum achievable distortion given a code size
$K$ \begin{equation}
D^{*}\left(K\right)=\underset{\mathcal{C}:\left|\mathcal{C}\right|=K}{\inf}\; D\left(\mathcal{C}\right),\label{eq:distor-rate-fn}\end{equation}
or the \emph{rate distortion function}: the minimum required code
size to achieve a given distortion $D$\begin{equation}
K^{*}\left(D\right)=\underset{D\left(\mathcal{C}\right)\leq D}{\inf}\;\left|\mathcal{C}\right|.\label{eq:rate-distor-fn}\end{equation}

\section{\label{sec:Spheres}Metric Balls in the Grassmann Manifold}

This section derives an explicit volume formula for a metric ball
$B\left(\delta\right)$ in the Grassmann manifold. It is the essential
tool to quantify the rate distortion tradeoff.

The volume of a ball can be expressed as a multivariate integral.
Assume the invariant measure $\mu$ and the chordal distance $d_{c}$.
For any given $P\in\mathcal{G}_{n,p}\left(\mathbb{L}\right)$ and
$Q\in\mathcal{G}_{n,q}\left(\mathbb{L}\right)$, define \[
B_{P}\left(\delta\right)=\left\{ \hat{Q}\in\mathcal{G}_{n,q}\left(\mathbb{L}\right):\; d_{c}\left(P,\hat{Q}\right)\leq\delta\right\} \]
and \[
B_{Q}\left(\delta\right)=\left\{ \hat{P}\in\mathcal{G}_{n,p}\left(\mathbb{L}\right):\; d_{c}\left(\hat{P},Q\right)\leq\delta\right\} .\]
It has been shown that $\mu\left(B_{P}\left(\delta\right)\right)=\mu\left(B_{Q}\left(\delta\right)\right)$
and the value is independent of the choice of the center \cite{James_54_Normal_Multivariate_Analysis_Orthogonal_Group}.
For convenience, we denote $B_{P}\left(\delta\right)$ and $B_{Q}\left(\delta\right)$
by $B\left(\delta\right)$ without distinguishing them. Then, the
volume of a metric ball $B\left(\delta\right)$ is given by\begin{equation}
\mu\left(B\left(\delta\right)\right)=\underset{\sum_{i=1}^{p}\sin^{2}\theta_{i}\leq\delta^{2}}{\int\cdots\int}\; d\mu_{\bm{\theta}},\label{eq:actual_volume}\end{equation}
where $1\leq\theta_{1}\leq\frac{\pi}{2},\cdots,1\leq\theta_{p}\leq\frac{\pi}{2}$
are the principle angles and the differential form $d\mu_{\bm{\theta}}$
is the joint density of the $\theta_{i}$'s \cite{James_54_Normal_Multivariate_Analysis_Orthogonal_Group,Adler_2004_Integrals_Grassmann}. 

Theorem \ref{thm:Volume_formula} computes the multivariate integral
(\ref{eq:actual_volume}) into a simple exponential form.

\begin{thm}
\label{thm:Volume_formula}When $\delta\leq1$, the volume of a metric
ball $B\left(\delta\right)$ is given by \begin{equation}
\mu\left(B\left(\delta\right)\right)=c_{n,p,q,\beta}\delta^{\beta p\left(n-q\right)}\left(1+c_{n,p,q,\beta}^{\left(1\right)}\delta^{2}+o\left(\delta^{2}\right)\right),\label{eq:simplified_volume_formula}\end{equation}
where\[
\beta=\left\{ \begin{array}{ll}
1 & \mathrm{if}\;\mathbb{L}=\mathbb{R}\\
2 & \mathrm{if}\;\mathbb{L}=\mathbb{C}\end{array}\right.,\]
\begin{equation}
c_{n,p,q,\beta}=\left\{ \begin{array}{l}
\frac{1}{\Gamma\left(\frac{\beta}{2}p\left(n-q\right)+1\right)}\prod_{i=1}^{p}\frac{\Gamma\left(\frac{\beta}{2}\left(n-i+1\right)\right)}{\Gamma\left(\frac{\beta}{2}\left(q-i+1\right)\right)}\\
\quad\quad\quad\quad\quad\quad\quad\quad\quad\quad\;\mathrm{if}\; p+q\le n\\
\frac{1}{\Gamma\left(\frac{\beta}{2}p\left(n-q\right)+1\right)}\prod_{i=1}^{n-q}\frac{\Gamma\left(\frac{\beta}{2}\left(n-i+1\right)\right)}{\Gamma\left(\frac{\beta}{2}\left(n-p-i+1\right)\right)}\\
\quad\quad\quad\quad\quad\quad\quad\quad\quad\quad\;\mathrm{if}\; p+q\ge n\end{array}\right.,\label{eq:constant-0}\end{equation}
and \begin{equation}
c_{n,p,q,\beta}^{\left(1\right)}=-\left(\frac{\beta}{2}\left(q-p+1\right)-1\right)\frac{\frac{\beta}{2}p\left(n-q\right)}{\frac{\beta}{2}p\left(n-q\right)+1}.\label{eq:constant-1}\end{equation}

\end{thm}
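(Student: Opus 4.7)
The plan is to rewrite the volume integral (\ref{eq:actual_volume}) using the explicit joint density of the principal angles, and then change variables so that the small-$\delta$ scaling becomes transparent.

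First, I would recall (James 1954, Adler 2004) that the joint density of $x_i=\sin^{2}\theta_{i}$ induced by the invariant measure on the Grassmannian is, up to an explicit gamma-function normalization,
\begin{equation*}
\prod_{i=1}^{p}x_{i}^{\frac{\beta}{2}(n-p-q+1)-1}(1-x_{i})^{\frac{\beta}{2}(q-p+1)-1}\prod_{i<j}|x_{i}-x_{j}|^{\beta}
\end{equation*}
on $[0,1]^{p}$ when $p+q\le n$. When $p+q\ge n$, exactly $p+q-n$ principal angles are forced to zero and a density of the same Jacobi shape governs the remaining $n-q$ nontrivial angles; this is the geometric origin of the two branches in (\ref{eq:constant-0}).

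Next, substitute $x_{i}=\delta^{2}u_{i}$. The region $\sum x_{i}\le\delta^{2}$ becomes the fixed simplex $\Delta=\{u_{i}\ge 0,\ \sum u_{i}\le 1\}$, and a direct count of the $\delta$-powers coming from $x_{i}^{a-1}$, from $|x_{i}-x_{j}|^{\beta}$, and from $dx_{i}=\delta^{2}du_{i}$ gives exactly $\delta^{\beta p(n-q)}$ in either branch. The only remaining $\delta$-dependence lives in $\prod_{i}(1-\delta^{2}u_{i})^{(\beta/2)(q-p+1)-1}$, whose Taylor expansion is
\begin{equation*}
1-\Bigl(\tfrac{\beta}{2}(q-p+1)-1\Bigr)\delta^{2}\sum_{i}u_{i}+O(\delta^{4}).
\end{equation*}
Hence $c_{n,p,q,\beta}$ equals the Jacobi normalization times the simplex integral $\int_{\Delta}\prod u_{i}^{a-1}\prod_{i<j}|u_{i}-u_{j}|^{\beta}du$, and $c_{n,p,q,\beta}^{(1)}=-(\tfrac{\beta}{2}(q-p+1)-1)\,\mathrm{E}[\sum_{i}u_{i}]$, the expectation being taken with respect to the induced probability measure on $\Delta$.

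To finish, decompose $u=tv$ with $t=\sum_{i}u_{i}\in[0,1]$ and $v$ on the unit simplex. The integrand is homogeneous of degree $\tfrac{\beta}{2}p(n-q)-1$ in $t$, so the radial integration separates cleanly and, combined with a one-step shift for the moment $\mathrm{E}[\sum_{i}u_{i}]$, produces exactly the ratio $\tfrac{(\beta/2)p(n-q)}{(\beta/2)p(n-q)+1}$ appearing in (\ref{eq:constant-1}); this also yields the $1/\Gamma(\tfrac{\beta}{2}p(n-q)+1)$ prefactor in (\ref{eq:constant-0}). The remaining angular integral on the unit simplex is a classical Selberg integral; multiplying its value by the Jacobi density normalization telescopes to the product of gamma ratios in (\ref{eq:constant-0}).

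The main obstacle will be the algebraic bookkeeping in the Selberg evaluation --- massaging the resulting gamma products into the compact form of (\ref{eq:constant-0}) --- and separately verifying that the $p+q\ge n$ branch (in which only $n-q$ angles are free and the Jacobi parameters shift accordingly) yields the second line of the same formula. The differential-geometric input, namely the joint-angle density together with the forced-zero-angle decomposition, is classical; the heavy lifting is in the gamma-function manipulations.
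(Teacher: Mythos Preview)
The paper itself does not contain a proof of Theorem~\ref{thm:Volume_formula}; it only states that the proof appears in the journal version \cite{Dai_05_Quantization_Grassmannian_manifold}. What can be said is that your outline is sound and follows the natural route strongly suggested by the paper's own setup: the integral representation (\ref{eq:actual_volume}) together with the Jacobi-type joint density of the principal angles from \cite{James_54_Normal_Multivariate_Analysis_Orthogonal_Group,Adler_2004_Integrals_Grassmann}, exactly as you invoke. The substitution $x_i=\delta^2 u_i$ does produce the exponent $\beta p(n-q)$ by the homogeneity count you give; the polar decomposition $u=tv$ correctly isolates the radial moment $\mathrm{E}\bigl[\sum_i u_i\bigr]=\frac{(\beta/2)p(n-q)}{(\beta/2)p(n-q)+1}$, matching (\ref{eq:constant-1}); and the remaining simplex integral is indeed a Selberg integral whose gamma factors, combined with the Jacobi normalization, collapse to (\ref{eq:constant-0}). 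Your handling of the $p+q\ge n$ branch --- $p+q-n$ forced zero angles and a reduced Jacobi density in the remaining $n-q$ variables --- is also the right mechanism.

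One small remark: the $1/\Gamma\bigl(\tfrac{\beta}{2}p(n-q)+1\bigr)$ in (\ref{eq:constant-0}) does not come from the bare radial integral $\int_0^1 t^{d}\,dt=1/(d+1)$ alone, but from that factor together with pieces of the Selberg product; you allude to this telescoping, but it is worth making explicit in a full write-up. Otherwise the outline is correct and, given that the paper points to precisely the same density formula as its starting point, almost certainly coincides with the deferred proof.
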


The proof is given in the journal version of this paper \cite{Dai_05_Quantization_Grassmannian_manifold}.

%Theorem

There are two cases where the volume formula becomes exact.

\begin{cor}
\label{cor:Exact-Volume-Formula}When $\delta\leq1$, in either of
the following two cases, 
\begin{enumerate}
\item $\mathbb{L}=\mathbb{C}$ and $q=p$;
\item $\mathbb{L}=\mathbb{R}$ and $q=p+1$, 
\end{enumerate}
the volume of a metric ball $B\left(\delta\right)$ is exactly \[
\mu\left(B\left(\delta\right)\right)=c_{n,p,q,\beta}\delta^{\beta p\left(n-q\right)},\]
 where $c_{n,p,q,\beta}$ is defined in (\ref{eq:constant-0}). 
\end{cor}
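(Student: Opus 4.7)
The plan is to trace every higher-order correction in \eqref{eq:simplified_volume_formula} to a single factor in the joint density of principal angles and verify that the two cases of the corollary are precisely those that annihilate that factor. The first-order part is an immediate substitution into \eqref{eq:constant-1}: for case 1, $\frac{\beta}{2}(q-p+1) - 1 = 1\cdot 1 - 1 = 0$, and for case 2, $\frac{1}{2}\cdot 2 - 1 = 0$, so in both cases $c_{n,p,q,\beta}^{(1)} = 0$. That only says the formula is accurate through $\delta^{\beta p(n-q)+2}$; to get the stronger statement that no remainder exists at all, I would return to the integral representation \eqref{eq:actual_volume}.

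In the derivation of Theorem~\ref{thm:Volume_formula}, the joint density $d\mu_{\bm\theta}$ is made explicit via the matrix-variate beta distribution of $X_1^\dagger X_1$, where $X_1$ is the appropriate block of a Haar-distributed orthogonal/unitary matrix. Under the change of variable $s_i = \sin^2\theta_i$, this density is proportional to
\[
\prod_{i=1}^{p} s_i^{\frac{\beta}{2}(n-p-q+1)-1}\,(1-s_i)^{\frac{\beta}{2}(q-p+1)-1}\prod_{i<j}\lvert s_i - s_j\rvert^{\beta},
\]
with the analogous density on the free $n-q$ angles in the complementary regime $p+q\ge n$. Every $\delta^{2k}$ correction in the expansion \eqref{eq:simplified_volume_formula} comes from Taylor-expanding $\prod_i(1-s_i)^{\frac{\beta}{2}(q-p+1)-1}$ in powers of the $s_i$'s, since this is the sole non-homogeneous factor in the integrand. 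The two parameter choices in the corollary are exactly those that make the exponent $\frac{\beta}{2}(q-p+1)-1$ vanish, so this factor equals $1$ identically and the density reduces to a monomial-times-Vandermonde form, homogeneous in $(s_1,\dots,s_p)$.

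With the density homogeneous, the dilation $s_i \mapsto \delta^2 t_i$ pulls every power of $\delta$ outside the integral, giving
\[
\mu\bigl(B(\delta)\bigr) = \delta^{\beta p(n-q)}\,\mu\bigl(B(1)\bigr)
\]
with no remainder, and a bookkeeping check (counting $2\alpha$ from each $s_i^{\alpha}$, $2\beta$ from each Vandermonde factor, and $2$ from each $ds_i$) confirms the exponent is indeed $\beta p(n-q)$. Matching against the leading asymptotic from Theorem~\ref{thm:Volume_formula} forces the prefactor to be $c_{n,p,q,\beta}$. The only real obstacle is cleanly producing the explicit joint density above, which is standard \cite{James_54_Normal_Multivariate_Analysis_Orthogonal_Group,Adler_2004_Integrals_Grassmann} and in any event underlies the proof of Theorem~\ref{thm:Volume_formula} itself; once it is in hand, the corollary is a one-line scaling argument.
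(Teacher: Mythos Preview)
Your proposal is correct and follows essentially the same approach the paper indicates: both Corollaries are obtained by revisiting the proof of Theorem~\ref{thm:Volume_formula} and tracking the higher-order terms, which is precisely what you do by writing out the joint density, identifying $(1-s_i)^{\frac{\beta}{2}(q-p+1)-1}$ as the sole source of corrections, and observing that the two stated cases kill this factor so that a homogeneity/scaling argument yields the exact formula. Your write-up is simply more explicit than the paper's one-line pointer, and the exponent bookkeeping and domain check (the simplex $\sum s_i\le\delta^2\le1$ scaling cleanly) are handled correctly.
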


We also have the general bounds:

\begin{cor}
\label{cor:volume-bounds}Assume $\delta\leq1$. If $\mathbb{L}=\mathbb{R}$
and $p=q$ , the volume of $B\left(\delta\right)$ is bounded by \[
c_{n,p,p,1}\delta^{p\left(n-p\right)}\leq\mu\left(B\left(\delta\right)\right)\leq c_{n,p,p,1}\delta^{p\left(n-p\right)}\left(1-\delta^{2}\right)^{-\frac{p}{2}}.\]
For all other cases, \begin{align*}
 & \left(1-\delta^{2}\right)^{\frac{\beta}{2}p\left(q-p+1\right)-p}c_{n,p,q,\beta}\delta^{\beta p\left(n-q\right)}\\
 & \quad\quad\leq\mu\left(B\left(\delta\right)\right)\leq c_{n,p,q,\beta}\delta^{\beta p\left(n-q\right)}.\end{align*}

\end{cor}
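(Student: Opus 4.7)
The plan is to recycle the integral representation (\ref{eq:actual_volume}) underlying Theorem~\ref{thm:Volume_formula} and extract the bounds by a single pointwise comparison in the integrand. Setting $x_{i}=\sin^{2}\theta_{i}$, the joint density of the squared sines of the principal angles has the Jacobi form
\[
f(\bm{x}) \;\propto\; \prod_{i} x_{i}^{a}(1-x_{i})^{b}\prod_{i<j}\left|x_{i}-x_{j}\right|^{\beta},
\]
where $b=\frac{\beta}{2}(q-p+1)-1$ is the parameter that drives the bound (and $a$ is determined by $n,p,q,\beta$); the product runs over the $\min(p,n-q)$ non-trivial principal angles. Substituting into (\ref{eq:actual_volume}) expresses $\mu(B(\delta))$ as the integral of this density over $\{\sum_{i}x_{i}\le\delta^{2},\;x_{i}\ge 0\}$.

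On this region $1-\delta^{2}\le 1-x_{i}\le 1$, so $(1-x_{i})^{b}$ admits a pointwise two-sided bound: if $b\ge 0$ then $(1-\delta^{2})^{b}\le(1-x_{i})^{b}\le 1$, while if $b<0$ the inequalities reverse. A direct check shows $b<0$ occurs precisely when $\mathbb{L}=\mathbb{R}$ and $p=q$ (giving $b=-1/2$) and $b\ge 0$ in every other case; this is exactly the dichotomy stated in the corollary. Pulling $\prod_{i}(1-x_{i})^{b}$ out of the integral leaves the residual integral
\[
I(\delta) \;=\; \int_{\sum_{i}x_{i}\le\delta^{2},\; x_{i}\ge 0}(\text{const})\prod_{i} x_{i}^{a}\prod_{i<j}\left|x_{i}-x_{j}\right|^{\beta}\,d\bm{x},
\]
which by the rescaling $x_{i}=\delta^{2}y_{i}$ equals $c_{n,p,q,\beta}\,\delta^{\beta p(n-q)}$: the $\delta$-dependence factors cleanly by a standard exponent count, and the remaining Selberg-type integral over the standard simplex evaluates to $c_{n,p,q,\beta}$. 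This is precisely the leading-order computation performed inside the proof of Theorem~\ref{thm:Volume_formula}, so its value may simply be quoted.

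Combining the pointwise bound with $I(\delta)=c_{n,p,q,\beta}\delta^{\beta p(n-q)}$ produces the two-sided estimates of the corollary. The exponent on $(1-\delta^{2})$ in the lower bound equals $b$ times the number of non-trivial angles: when $p+q\le n$ this is $pb=\frac{\beta}{2}p(q-p+1)-p$ exactly as stated, and for $p+q\ge n$ the number of factors drops to $n-q\le p$, but because $b\ge 0$ and $1-\delta^{2}\le 1$ yield $(1-\delta^{2})^{pb}\le(1-\delta^{2})^{(n-q)b}$, the stated lower bound still holds \emph{a fortiori}. The main obstacle is the bookkeeping in the $p+q\ge n$ regime, where the density and the constant $c_{n,p,q,\beta}$ take the alternate form in (\ref{eq:constant-0}); verifying that the Selberg-type residual integral reproduces exactly this alternate constant is the nontrivial step, but it is handled inside the proof of Theorem~\ref{thm:Volume_formula}.
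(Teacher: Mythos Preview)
Your proposal is correct and is essentially the same approach the paper indicates: the paper states that Corollaries~\ref{cor:Exact-Volume-Formula} and~\ref{cor:volume-bounds} ``follow the proof of Theorem~\ref{thm:Volume_formula} by tracking the higher order terms,'' and your argument does precisely that---you re-enter the integral representation, isolate the $(1-x_{i})^{b}$ factor with $b=\frac{\beta}{2}(q-p+1)-1$ that generates the correction $c_{n,p,q,\beta}^{(1)}$ in (\ref{eq:constant-1}), and replace the Taylor expansion by the pointwise sandwich $(1-\delta^{2})^{b}\le(1-x_{i})^{b}\le1$ (reversed when $b<0$), while the residual Selberg-type integral is exactly the leading-order computation of Theorem~\ref{thm:Volume_formula}. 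Your handling of the $p+q\ge n$ regime, where the effective number of principal angles drops to $n-q$ and the stated exponent $pb$ is a fortiori valid, is a useful clarification that the paper leaves to the journal version.
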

\begin{proof}
Corollary \ref{cor:Exact-Volume-Formula} and \ref{cor:volume-bounds}
follow the proof of Theorem \ref{thm:Volume_formula} by tracking
the higher order terms. 
\end{proof}

Theorem \ref{thm:Volume_formula} is of course consistent with the
previous results in \cite{Dai_Globecom05_Quantization_bounds_Grassmann_manifold,Barg_IT02_Bounds_Grassmann_Manifold,Sabharwal_IT03_Beamforming_MIMO},
which pertain to special choices of $n$, $p$, $q$ or $\mathbb{L}$.
Importantly though, Theorem \ref{thm:Volume_formula} is distinct
in that it holds for arbitrary $n,$ $p$, $q$ and $\mathbb{L}$. 

For engineering purposes, it is often satisfactory to approximate
the volume of a metric ball $B\left(\delta\right)$ by $c_{n,p,q,\beta}\delta^{\beta p\left(n-q\right)}$
when $\delta\le1$. Fig. \ref{cap:volume_in_Grassmann} compares the
simulated volume (\ref{eq:actual_volume}) and the approximation $c_{n,p,q,\beta}\delta^{\beta p\left(n-q\right)}$.
Since it is often difficult to directly evaluate the multivariate
integral in (\ref{eq:actual_volume}), we simulate $\mu\left(B\left(\delta\right)\right)=\Pr\left\{ \hat{P}\in\mathcal{G}_{n,p}\left(\mathbb{L}\right):\; d_{c}\left(\hat{P},Q\right)\leq\delta\right\} $
by fixing $Q$ and generating isotropically distributed $\hat{P}$.
The simulation results show that our volume approximation $c_{n,p,q,\beta}\delta^{\beta p\left(n-q\right)}$
(solid lines) is close to the simulated volume (circles) when $\delta\le1$.
We also compare our approximation with Barg-Nogin approximation developed
in \cite{Barg_IT02_Bounds_Grassmann_Manifold}. There, an volume approximation
$\left(\delta/\sqrt{p}\right)^{\beta np}$ is derived by Laplace method
and is only valid for the $p=q\ll n$. Simulations show that the simulated
volume and Barg-Nogin approximation (dash lines) may not be of the
same order while our approximation is much more accurate.

\begin{figure}[h]
\includegraphics[clip,scale=0.65]{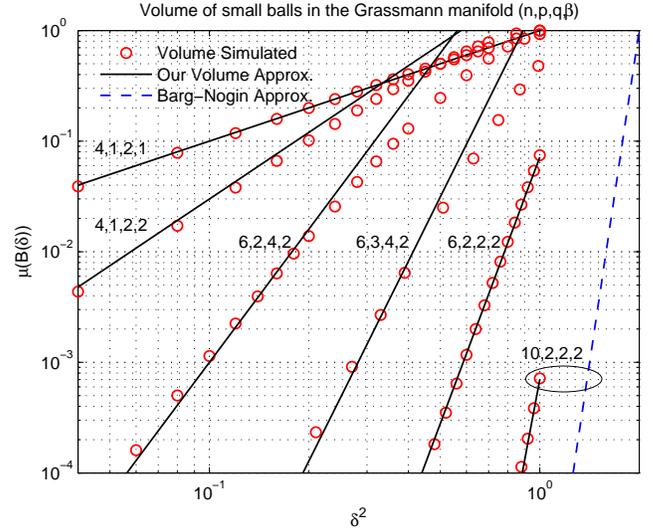}

\caption{\label{cap:volume_in_Grassmann}Volume of small balls in the Grassmann
manifold. The integers besides curves are, from left to right, $n,\; p,\; q,\;\mathrm{and}\;\beta$
respectively.}
\end{figure}

\section{Quantization Bounds\label{sec:Quantization-Bounds}}

This section quantifies the rate distortion tradeoff for the unequal
dimensional quantization problem. The results hold for arbitrary $n$,
$p$, $q$ and $\mathbb{L}$.

Recall the distortion rate function defined in (\ref{eq:distor-rate-fn}).
A lower bound and an upper bound are derived.

\begin{thm}
\label{thm:DRF_bounds}When $K$ is sufficiently large ($\left(c_{n,p,q,\beta}K\right)^{-\frac{2}{\beta p\left(n-q\right)}}\leq1$
necessarily), the distortion rate function is bounded as in \begin{align}
 & \frac{\beta p\left(n-q\right)}{\beta p\left(n-q\right)+2}\left(c_{n,p,q,\beta}K\right)^{-\frac{2}{\beta p\left(n-q\right)}}\left(1+o\left(1\right)\right)\leq D^{*}\left(K\right)\nonumber \\
 & \quad\quad\quad\leq\frac{2\Gamma\left(\frac{2}{\beta p\left(n-q\right)}\right)}{\beta p\left(n-q\right)}\left(c_{n,p,q,\beta}K\right)^{-\frac{2}{\beta p\left(n-q\right)}}\left(1+o\left(1\right)\right).\label{eq:DRF_bounds}\end{align}

\end{thm}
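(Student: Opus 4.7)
The plan is to prove the two bounds separately: the lower bound by a sphere-covering (rearrangement) argument, and the upper bound by random coding, both driven by the small-ball volume formula of Theorem \ref{thm:Volume_formula}.

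For the lower bound, fix any code $\mathcal{C}=\{Q_{1},\ldots,Q_{K}\}$ and let $V_{k}$ be the Voronoi cell of $Q_{k}$ under $d_{c}$, with $v_{k}=\mu(V_{k})$ and $\sum_{k}v_{k}=1$ (treating the invariant measure as a probability). Since $d_{c}^{2}(\cdot,Q_{k})$ is nonnegative with sublevel sets equal to metric balls, the layer-cake/rearrangement inequality gives
\[
\int_{V_{k}}d_{c}^{2}(P,Q_{k})\,d\mu(P)\;\geq\;\int_{B(Q_{k},r_{k})}d_{c}^{2}(P,Q_{k})\,d\mu(P),
\]
where $r_{k}$ is defined by $\mu(B(Q_{k},r_{k}))=v_{k}$. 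Differentiating Theorem \ref{thm:Volume_formula} to recover the radial volume density and using $r_{k}^{2}=(v_{k}/c_{n,p,q,\beta})^{2/(\beta p(n-q))}(1+o(1))$ yields
\[
\int_{B(Q_{k},r_{k})}d_{c}^{2}\,d\mu=\frac{\beta p(n-q)}{\beta p(n-q)+2}\,v_{k}\,r_{k}^{2}\,(1+o(1)).
\]
Summing over $k$ and applying Jensen's inequality (convexity of $v\mapsto v^{1+2/(\beta p(n-q))}$ subject to $\sum_{k}v_{k}=1$) drives the minimum to $v_{k}=1/K$, producing the stated lower bound. At this minimizer the radii are of order $K^{-1/(\beta p(n-q))}$, so the asymptotic form of Theorem \ref{thm:Volume_formula} genuinely applies for large $K$.

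For the upper bound, generate a random code of $K$ i.i.d.\ isotropically distributed codewords in $\mathcal{G}_{n,q}(\mathbb{L})$, so that $D^{*}(K)\leq\mathbb{E}_{\mathcal{C}}[D(\mathcal{C})]$. By isotropy we may condition on a fixed source $P$, and then
\[
\mathbb{E}_{\mathcal{C}}[D(\mathcal{C})]=\int_{0}^{p}\bigl(1-\mu(B(\sqrt{t}))\bigr)^{K}dt.
\]
Bounding $(1-x)^{K}\leq e^{-Kx}$ and plugging in the lower bound on $\mu(B(\delta))$ from Corollary \ref{cor:volume-bounds}, the contribution from $t$ bounded away from $0$ is exponentially small in $K$, while for small $t$ the $(1-t)$-type correction contributes only a $(1+o(1))$ factor. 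The substitution $u=Kc_{n,p,q,\beta}t^{\beta p(n-q)/2}$ then converts the main-term integral into $\frac{2}{\beta p(n-q)}\,(c_{n,p,q,\beta}K)^{-2/(\beta p(n-q))}\,\Gamma(2/(\beta p(n-q)))$, matching the upper bound in (\ref{eq:DRF_bounds}).

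The principal obstacle is bookkeeping the $o(1)$ errors inherited from Theorem \ref{thm:Volume_formula}: the volume formula is only asymptotic as $\delta\to 0$, so one must verify that on the lower side the Jensen minimizer sits squarely in that regime, and on the upper side that the Gamma substitution captures essentially all of the integral, so that the $(1+o(1))$ factors pass cleanly through. The unifying observation on both sides is that for large $K$ the dominant contribution comes from $\delta$ of order $K^{-1/(\beta p(n-q))}\to 0$, where Theorem \ref{thm:Volume_formula} and Corollary \ref{cor:volume-bounds} are sharp.
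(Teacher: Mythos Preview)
Your proposal is correct and follows essentially the same route as the paper: a sphere-covering lower bound and a random-coding upper bound, both reduced to the small-ball volume formula. Your rearrangement-plus-Jensen argument is precisely the rigorous form of the paper's ``ideal quantizer'' of $K$ equal-radius balls, and your tail-integral computation $\int_{0}^{p}(1-\mu(B(\sqrt{t})))^{K}\,dt$ with the Gamma substitution is exactly the extreme-order-statistics calculation the paper invokes; the only cosmetic point is that on the lower side it is cleaner to note that $v\mapsto\int_{B(r(v))}d_{c}^{2}\,d\mu$ is exactly convex (its derivative is $r(v)^{2}$), apply Jensen first, and only then insert the asymptotic volume formula at $v=1/K$, which avoids carrying the $o(1)$ through cells of possibly large radius.
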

\begin{remrk}
For engineering purposes, the main order terms in (\ref{eq:DRF_bounds})
are usually accurate enough to characterize the distortion rate function.
The details of the $\left(1+o\left(1\right)\right)$ correction are
spelled out in the journal version of this paper \cite{Dai_05_Quantization_Grassmannian_manifold}.
\end{remrk}

The proof is provided in the journal version of this paper \cite{Dai_05_Quantization_Grassmannian_manifold}.
We sketch it as follows.

The lower bound is proved by a sphere covering argument. The key is
to construct an ideal quantizer, which may not exist, to minimize
the distortion. Suppose that there exists $K$ metric balls of the
same radius $\delta_{0}$ packing and covering the whole $\mathcal{G}_{n,p}\left(\mathbb{L}\right)$
at the same time. Then the quantizer which maps each of those balls
into its center $Q\in\mathcal{G}_{n,q}\left(\mathbb{L}\right)$ gives
the minimum distortion among all quantizers. Of course such an ideal
covering  may not exist. Therefore, the corresponding distortion may
not be achievable. It is only a lower bound on the distortion rate
function.

Next the upper bound is obtained by calculating the average distortion
of random codes. The basic idea is that the distortion of any particular
code is an upper bound of the distortion rate function and so is the
average distortion of random codes. A random code $\mathcal{C}_{\mathrm{rand}}=\left\{ Q_{1},\cdots,Q_{K}\right\} $
is generated by drawing the codewords $Q_{i}$'s independently from
the isotropic distribution on $\mathcal{G}_{n,q}\left(\mathbb{L}\right)$.
The average distortion of random codes is given by $\mathrm{E}_{\mathcal{C}_{\mathrm{rand}}}\left[D\left(\mathcal{C}_{\mathrm{rand}}\right)\right]$.
By extreme order statistics, see for example \cite{JanosGalambos1987_extreme_order_statistics},
the calculation of $\mathrm{E}_{\mathcal{C}_{\mathrm{rand}}}\left[D\left(\mathcal{C}_{\mathrm{rand}}\right)\right]$
is directly related to the volume (\ref{eq:actual_volume}). Based
on our volume formula (\ref{eq:simplified_volume_formula}), the asymptotic
value of $\mathrm{E}_{\mathcal{C}_{\mathrm{rand}}}\left[D\left(\mathcal{C}_{\mathrm{rand}}\right)\right]$
is computed and thus the upper bound is obtained for large $K$. 

As the dual part of the distortion rate tradeoff, lower and upper
bounds are constructed for the rate distortion function. 

\begin{cor}
\label{cor:RDF_bounds}When the required distortion $D$ is sufficiently
small ($D\leq1$ necessarily), the rate distortion function satisfies
the following bounds,\begin{align}
 & \frac{1}{c_{n,p,q,\beta}}\left(\frac{\beta p\left(n-q\right)}{2\Gamma\left(\frac{2}{\beta p\left(n-q\right)}\right)}D\right)^{-\frac{\beta p\left(n-q\right)}{2}}\left(1+o\left(1\right)\right)\leq K^{*}\left(D\right)\nonumber \\
 & \quad\quad\leq\frac{1}{c_{n,p,q,\beta}}\left(\frac{\beta p\left(n-q\right)+2}{\beta p\left(n-q\right)}D\right)^{-\frac{\beta p\left(n-q\right)}{2}}\left(1+o\left(1\right)\right).\label{eq:RDF_bounds}\end{align}

\end{cor}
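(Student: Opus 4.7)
The plan is to derive Corollary \ref{cor:RDF_bounds} by algebraically inverting the two bounds on $D^{*}(K)$ from Theorem \ref{thm:DRF_bounds}. The underlying duality is that $K^{*}(D) \leq K$ if and only if $D^{*}(K) \leq D$: a code of size $K$ achieving distortion at most $D$ exists precisely when $D^{*}(K) \leq D$, and any such code may be padded with arbitrary additional codewords without degrading its distortion. Consequently $K^{*}$ and $D^{*}$ are mutual inverses, both non-increasing. For two strictly decreasing functions, $f_{1} \leq f_{2}$ implies $f_{1}^{-1} \leq f_{2}^{-1}$, so inverting the lower bound on $D^{*}$ furnishes a lower bound on $K^{*}$, and likewise for the upper side.

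Concretely, set $\alpha = 2/(\beta p(n-q))$ and $A = c_{n,p,q,\beta}$. Using the identity $2\Gamma(\alpha)/(\beta p(n-q)) = \alpha\Gamma(\alpha) = \Gamma(1+\alpha)$, Theorem \ref{thm:DRF_bounds} reads
\[
\tfrac{1}{1+\alpha}(AK)^{-\alpha}(1+o(1)) \leq D^{*}(K) \leq \Gamma(1+\alpha)(AK)^{-\alpha}(1+o(1)).
\]
Solving each of the two equations $\tfrac{1}{1+\alpha}(AK)^{-\alpha} = D$ and $\Gamma(1+\alpha)(AK)^{-\alpha} = D$ for $K$ is immediate: raise both sides to the power $-1/\alpha = -\beta p(n-q)/2$ to isolate $K$. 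The resulting explicit expressions for $K$ as a function of $D$, once the identity $1/\Gamma(1+\alpha) = \beta p(n-q)/(2\Gamma(\alpha))$ is re-expanded, are exactly the two closed-form expressions appearing in \eqref{eq:RDF_bounds}.

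The direction of each inequality is pinned down by the usual implication argument. On the one hand, if $K$ is chosen at or above the value making the upper bound on $D^{*}(K)$ equal to $D$, Theorem \ref{thm:DRF_bounds} guarantees a code of that size with distortion at most $D$, so $K^{*}(D)$ is bounded above by this value. On the other hand, if $K$ is strictly below the value at which the lower bound on $D^{*}(K)$ equals $D$, then $D^{*}(K) > D$ and no code of that size meets the target, so $K^{*}(D)$ is bounded below by that value. The only real technicality, rather than obstacle, is tracking the $(1+o(1))$ correction under inversion: because $\alpha$ is fixed and the $o(1)$ factors in Theorem \ref{thm:DRF_bounds} vanish as $K \to \infty$ (equivalently, as $D \to 0$), one has $(1+o(1))^{-1/\alpha} = 1+o(1)$, which transfers the multiplicative error term cleanly from $D$-space to $K$-space. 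I would formalize this via an ansatz $K = K_{0}(D)(1+\epsilon(D))$, plug into the bounds of Theorem \ref{thm:DRF_bounds}, and conclude $\epsilon(D) = o(1)$ as $D \to 0$ by monotonicity and continuity.
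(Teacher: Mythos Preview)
Your approach is exactly the paper's: the corollary is obtained by inverting the two bounds of Theorem~\ref{thm:DRF_bounds} via the duality $K^{*}(D)\le K \Leftrightarrow D^{*}(K)\le D$, and your handling of the $(1+o(1))$ factor under the power $-1/\alpha$ is the standard argument.

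One point worth flagging, though: your own reasoning (lower bound on $D^{*}$ inverts to a lower bound on $K^{*}$, upper to upper) yields
\[
L^{-1}(D)=\frac{1}{A}\bigl((1+\alpha)D\bigr)^{-1/\alpha}\ \le\ K^{*}(D)\ \le\ U^{-1}(D)=\frac{1}{A}\Bigl(\frac{D}{\Gamma(1+\alpha)}\Bigr)^{-1/\alpha},
\]
whereas the corollary as printed has these two expressions on the \emph{opposite} sides. That the printed version is a typo is easy to confirm: since $(1+\alpha)\Gamma(1+\alpha)=\Gamma(2+\alpha)>1$ for $\alpha>0$, the displayed ``lower bound'' in \eqref{eq:RDF_bounds} is strictly larger than the displayed ``upper bound''. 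Your derivation gives the correct ordering; you simply did not notice that it disagrees with the stated inequality.
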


It is interesting to observe that the lower and upper bounds are asymptotically
the same. As a result, the asymptotic rate distortion tradeoff is
exactly quantified. 

\begin{thm}
\label{thm:Asymptotic-Meet}Suppose that $p$ and $q$ are fixed.
Let $n$ and the code rate $\log_{2}K$ approach infinity linearly
with $\frac{\log_{2}K}{n}\rightarrow\bar{r}$. If the normalized code
rate $\bar{r}$ is sufficiently large ($p2^{-\frac{2}{\beta p}\bar{r}}\leq1$
necessarily), then \[
\underset{\left(n,K\right)\rightarrow+\infty}{\lim}D^{*}\left(K\right)=p2^{-\frac{2}{\beta p}\bar{r}}.\]
On the other hand, if the required distortion $D$ is sufficiently
small ($D\leq1$ necessarily), then the minimum code size required
to achieve the distortion $D$ satisfies \begin{equation}
\underset{n\rightarrow+\infty}{\lim}\frac{\log_{2}K^{*}\left(D\right)}{n}=\frac{\beta p}{2}\log_{2}\left(\frac{p}{D}\right).\label{eq:RDF-asymptotics}\end{equation}

\end{thm}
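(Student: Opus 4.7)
My plan is to extract both limits directly from the non-asymptotic bounds in Theorem \ref{thm:DRF_bounds} and Corollary \ref{cor:RDF_bounds}, by showing that all pre-constants collapse to a single common value in the prescribed asymptotic regime. The two halves of the statement are dual; I describe the distortion-rate half in detail and then indicate what changes for the rate-distortion half.

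First I would rewrite both sides of (\ref{eq:DRF_bounds}) as the common factor $c_{n,p,q,\beta}^{-2/(\beta p(n-q))}\, K^{-2/(\beta p(n-q))}$ times a prefactor. The two prefactors $\frac{\beta p(n-q)}{\beta p(n-q)+2}$ and $\frac{2\,\Gamma(2/(\beta p(n-q)))}{\beta p(n-q)}$ both tend to $1$ as $n\to\infty$: the first trivially, and the second because $x\,\Gamma(x)=\Gamma(x+1)\to 1$ as $x=2/(\beta p(n-q))\to 0^{+}$. Since $n/(n-q)\to 1$ for $q$ fixed, the $K$-factor satisfies
\[
K^{-\frac{2}{\beta p(n-q)}}=2^{-\frac{2\log_{2}K}{\beta p(n-q)}}\longrightarrow 2^{-\frac{2\bar r}{\beta p}}.
\]

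The decisive step is to show $c_{n,p,q,\beta}^{-2/(\beta p(n-q))}\to p$. For $n$ sufficiently large the first branch of (\ref{eq:constant-0}) applies; the denominator $\prod_i\Gamma(\frac{\beta}{2}(q-i+1))$ is $n$-independent and contributes only an $O(1)$ term to $\ln c_{n,p,q,\beta}$. Applying Stirling's expansion $\ln\Gamma(y)=(y-\tfrac12)\ln y-y+\tfrac12\ln(2\pi)+O(1/y)$ to $\Gamma(\frac{\beta}{2}p(n-q)+1)$ and to each $\Gamma(\frac{\beta}{2}(n-i+1))$, the $n\ln n$ contributions cancel because $\sum_{i=1}^{p}\frac{\beta}{2}(n-i+1)$ and $\frac{\beta}{2}p(n-q)$ agree to leading order in $n$. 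The surviving linear-in-$n$ piece collapses to $-\frac{\beta p n}{2}\ln p$ after combining the $\ln(\tfrac{\beta p}{2})$ term produced by $-\ln\Gamma(\frac{\beta}{2}p(n-q)+1)$ with the $p\ln(\tfrac{\beta}{2})$ term produced by $\sum_i\ln\Gamma(\frac{\beta}{2}(n-i+1))$. Thus $\ln c_{n,p,q,\beta}=-\frac{\beta p n}{2}\ln p+O(\ln n)$, so $-\frac{2}{\beta p(n-q)}\ln c_{n,p,q,\beta}\to \ln p$. Combining, both bounds on $D^{*}(K)$ converge to $p\cdot 2^{-2\bar r/(\beta p)}$ and the squeeze gives the first assertion.

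For the second assertion I would start from Corollary \ref{cor:RDF_bounds}, take $\log_{2}$ of each bound, and divide by $n$. The factors $\frac{2\,\Gamma(2/(\beta p(n-q)))}{\beta p(n-q)}$ and $\frac{\beta p(n-q)+2}{\beta p(n-q)}$ again tend to $1$, so their $\log_{2}$'s divided by $n$ vanish; the $-\log_{2}D$ piece, multiplied by $\frac{\beta p(n-q)}{2n}$, tends to $\frac{\beta p}{2}\log_{2}(1/D)$; and the same Stirling calculation gives $-\frac{1}{n}\log_{2}c_{n,p,q,\beta}\to \frac{\beta p}{2}\log_{2}p$. Adding yields (\ref{eq:RDF-asymptotics}). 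The one nontrivial obstacle in the whole proof is the Stirling bookkeeping for $c_{n,p,q,\beta}$: one must track the sub-leading linear-in-$n$ piece, not just the dominant $n\ln n$ piece (which cancels), in order to isolate the $\ln p$ coefficient correctly; the rest is a routine application of the already-established bounds.
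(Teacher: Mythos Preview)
Your approach is exactly the one the paper takes: squeeze $D^{*}(K)$ between the two sides of (\ref{eq:DRF_bounds}), show both prefactors tend to $1$, and use Stirling on $c_{n,p,q,\beta}$ to extract the factor $p$. Your Stirling bookkeeping is correct; the $n\ln n$ contributions cancel and the surviving linear-in-$n$ piece is indeed $-\tfrac{\beta p n}{2}\ln p$, so $c_{n,p,q,\beta}^{-2/(\beta p(n-q))}\to p$.

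The one point you gloss over is precisely the one the paper singles out in the remark following the theorem: the $(1+o(1))$ factors in (\ref{eq:DRF_bounds}) and (\ref{eq:RDF_bounds}) are asymptotic statements for $K\to\infty$ with $n$ \emph{fixed}, and you need them to vanish in the joint limit $n,K\to\infty$ with $\log_2 K/n\to\bar r$. In that regime $(c_{n,p,q,\beta}K)^{-2/(\beta p(n-q))}$ does not tend to zero but to the fixed value $p\,2^{-2\bar r/(\beta p)}$, so the error terms are evaluated at a ball radius bounded away from $0$; one cannot simply invoke ``$o(1)\to 0$'' without knowing the explicit dependence of the error on $n$. The paper defers this to the journal version, so you should at least flag that this step requires the detailed form of the correction terms (coming from the $c_{n,p,q,\beta}^{(1)}\delta^{2}$ term in Theorem~\ref{thm:Volume_formula} and the corresponding refinements of the sphere-packing and random-coding estimates). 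Apart from this caveat, your argument is complete and matches the paper's.
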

\begin{remrk}
That the $\left(1+o\left(1\right)\right)$ multiplicative errors in
(\ref{eq:DRF_bounds}) and (\ref{eq:RDF_bounds}) disappear is the
content of \cite{Dai_05_Quantization_Grassmannian_manifold}. We omit
the corresponding details due to the space limitation.
\end{remrk}

Fig. \ref{cap:DRF_bounds} compares the simulated distortion rate
function (the plus markers) with its lower bound (the dashed lines)
and upper bound (the solid lines) in (\ref{eq:DRF_bounds}). To simulate
the distortion rate function, we use the max-min criterion to design
codes and the employ the corresponding distortion as an estimate of
the distortion rate function. Simulation results show that the bounds
in (\ref{eq:DRF_bounds}) hold for large $K$. When $K$ is relatively
small, the formula (\ref{eq:DRF_bounds}) can serve as good approximations
to the distortion rate function as well. Furthermore, we compare our
bounds with the approximation (the {}``x'' markers) derived in \cite{Heath_ICASSP05_Quantization_Grassmann_Manifold},
which is partly based on Barg-Nogin volume approximation. Simulations
show that the approximation in \cite{Heath_ICASSP05_Quantization_Grassmann_Manifold}
is neither an upper bound nor a lower bound. It works for the case
that $n=10$ and $p=2$ but doesn't work when $n\leq8$ and $p=2$.
As a comparison, our bounds (\ref{eq:DRF_bounds}) hold for arbitrary
$n$ and $p$.

\begin{figure}
\begin{centering}
\includegraphics[clip,scale=0.6]{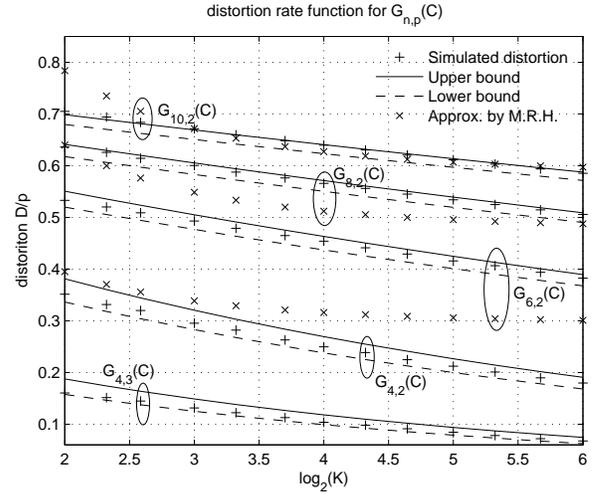}
\par\end{centering}

\caption{\label{cap:DRF_bounds}Bounds on the distortion rate function}
\end{figure}

While the asymptotic rate distortion tradeoff is precisely quantified,
the next question could be how to achieve it. Same to  many cases
in information theory, \emph{random codes are asymptotically optimal
with probability one}.

\begin{cor}
Consider  unequal dimensional quantization from $\mathcal{G}_{n,p}\left(\mathbb{L}\right)$
to $\mathcal{G}_{n,q}\left(\mathbb{L}\right)$. Let $\mathcal{C}_{\mathrm{rand}}\subset\mathcal{G}_{n,q}\left(\mathbb{L}\right)$
be a code randomly generated from the isotropic distribution and with
size $K$. Fix $p$ and $q$. Let $n,\log_{2}K\rightarrow\infty$
with $\frac{\log_{2}K}{n}\rightarrow\bar{r}\in\mathbb{R}^{+}$. If
the normalized code rate $\bar{r}$ is sufficiently large ($p2^{-\frac{2}{\beta p}\bar{r}}\leq1$
necessarily), then for $\forall\epsilon>0$, \[
\underset{\left(n,K\right)\rightarrow\infty}{\lim}\Pr\left(D\left(\mathcal{C}_{\mathrm{rand}}\right)>p2^{-\frac{2}{\beta p}\bar{r}}+\epsilon\right)=0\]

\end{cor}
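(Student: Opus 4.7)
The plan is to establish the pointwise bound $D(\mathcal{C}) \le \delta_{1}^{2} + p\, V(\mathcal{C})$, where $\delta_{1}^{2}$ is chosen just above the target $\tau = p\,2^{-\frac{2}{\beta p}\bar{r}}$ and $V(\mathcal{C}) = 1 - \mu\bigl(\bigcup_{j=1}^{K} B_{Q_{j}}(\delta_{1})\bigr)$ is the uncovered fraction of $\mathcal{G}_{n,p}(\mathbb{L})$. A first-moment (Markov) argument applied to $V(\mathcal{C}_{\mathrm{rand}})$, combined with the volume asymptotics of Theorem \ref{thm:Volume_formula}, then forces $V(\mathcal{C}_{\mathrm{rand}}) \to 0$ in probability at this scale, and hence $D(\mathcal{C}_{\mathrm{rand}}) \le \tau + \epsilon$ with probability tending to one.

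Concretely, fix $\epsilon > 0$ and set $\delta_{1}^{2} = \tau + \epsilon/2$. For any realization of $\mathcal{C} = \{Q_{1},\ldots,Q_{K}\}$, split the integral defining $D(\mathcal{C})$ according to whether $\min_{j} d_{c}(P,Q_{j}) \le \delta_{1}$ or not. The near part contributes at most $\delta_{1}^{2}$, and on the far part the integrand is at most $p$ (the maximal squared chordal distance), so the far part contributes at most $p\,V(\mathcal{C})$. Thus $D(\mathcal{C}) \le \delta_{1}^{2} + p\,V(\mathcal{C})$, and Markov yields
\begin{align*}
\Pr\bigl(D(\mathcal{C}_{\mathrm{rand}}) > \tau + \epsilon\bigr) &\le \Pr\bigl(V(\mathcal{C}_{\mathrm{rand}}) > \tfrac{\epsilon}{2p}\bigr) \\
&\le \tfrac{2p}{\epsilon}\,\mathrm{E}\bigl[V(\mathcal{C}_{\mathrm{rand}})\bigr].
\end{align*}

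Next I would evaluate $\mathrm{E}[V(\mathcal{C}_{\mathrm{rand}})]$ exactly. By Fubini, swap the $P$- and $\mathcal{C}_{\mathrm{rand}}$-expectations. For each fixed $P_{0}$, independence of the $Q_{j}$'s gives $\Pr(P_{0} \notin \bigcup_{j} B_{Q_{j}}(\delta_{1})) = (1 - \mu(B_{P_{0}}(\delta_{1})))^{K}$, and the invariance of $\mu$ renders this quantity independent of $P_{0}$. Hence $\mathrm{E}[V(\mathcal{C}_{\mathrm{rand}})] = (1 - \mu(B(\delta_{1})))^{K} \le \exp(-K\,\mu(B(\delta_{1})))$, and it suffices to prove $K\,\mu(B(\delta_{1})) \to \infty$.

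The final step invokes Theorem \ref{thm:Volume_formula}: to leading order, $\mu(B(\delta_{1})) \sim c_{n,p,q,\beta}\,\delta_{1}^{\beta p(n-q)}$. Applying Stirling to (\ref{eq:constant-0}) with $p,q$ fixed gives $\frac{1}{n}\log c_{n,p,q,\beta} \to -\frac{\beta p}{2}\log p$, so
\[
\tfrac{1}{n}\log\bigl(K\,\mu(B(\delta_{1}))\bigr) \to \bar{r}\log 2 + \tfrac{\beta p}{2}\log(\delta_{1}^{2}/p).
\]
Since $\delta_{1}^{2} = \tau + \epsilon/2$ strictly exceeds $\tau = p\,2^{-\frac{2}{\beta p}\bar{r}}$, the right-hand side is strictly positive; thus $K\,\mu(B(\delta_{1}))$ diverges exponentially in $n$ and the Markov bound goes to zero. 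The principal obstacle is the Stirling asymptotics of $c_{n,p,q,\beta}$, but this is exactly the calculation already underlying Theorem \ref{thm:Asymptotic-Meet}, so the argument closes on results previously established (with full details in the journal version \cite{Dai_05_Quantization_Grassmannian_manifold}).
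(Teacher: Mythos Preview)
The paper omits its own proof of this corollary (``The proof is omitted due to the space limitation''), so there is no line-by-line comparison to make. Your argument is correct in substance; two small points deserve tightening. First, Theorem~\ref{thm:Volume_formula} is an expansion as $\delta\to 0$ for fixed $n$, whereas you hold $\delta_{1}$ fixed and send $n\to\infty$; the right reference is Corollary~\ref{cor:volume-bounds}, whose $n$-independent prefactor $(1-\delta_{1}^{2})^{\frac{\beta}{2}p(q-p+1)-p}$ vanishes under $\frac{1}{n}\log(\cdot)$ and so yields the same limit you state. Second, you implicitly need $\delta_{1}\le 1$; since the tail probability is monotone in $\epsilon$ it suffices to treat small $\epsilon$, and then $\tau<1$ forces $\delta_{1}^{2}=\tau+\epsilon/2<1$.

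As for strategy, the natural route suggested by the paper's surrounding results is different from yours. The sketch of Theorem~\ref{thm:DRF_bounds} identifies its upper bound with $\mathrm{E}[D(\mathcal{C}_{\mathrm{rand}})]$ (computed via extreme order statistics), and Theorem~\ref{thm:Asymptotic-Meet} shows that both this upper bound and the lower bound $D^{*}(K)$ converge to $\tau$. Since $D(\mathcal{C}_{\mathrm{rand}})\ge D^{*}(K)$ deterministically, Markov on the nonnegative variable $D(\mathcal{C}_{\mathrm{rand}})-D^{*}(K)$, whose mean tends to zero, immediately gives the corollary. Your covering-plus-Markov approach is a genuinely distinct, more self-contained path: it bypasses the extreme-order-statistics computation entirely and delivers an explicit exponential rate for $\Pr(D(\mathcal{C}_{\mathrm{rand}})>\tau+\epsilon)$, at the modest cost of redoing the Stirling asymptotics of $c_{n,p,q,\beta}$ rather than simply quoting Theorem~\ref{thm:Asymptotic-Meet}.
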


The proof is omitted due to the space limitation.

\section{\label{sec:Application}Applications to Communication Theory}

\subsection{Channel Capacity of AWGN Channel}

Although the capacity of AWGN channel is well known, it is interesting
to re-calculate it from an interpretation in the Grassmann manifold. 

The signal transmission model for an AWGN channel is that $\mathbf{Y}=\mathbf{X}+\mathbf{W},$
where $\mathbf{Y},\;\mathbf{X},\;\mathbf{W}\in\mathbb{L}^{n}$ are
the received signal, the transmitted signal and the additive Gaussian
noise respectively, and $\mathbb{L}$ is either $\mathbb{R}$ or $\mathbb{C}$.
Assume that $\mathbf{X}$ and $\mathbf{W}$ are Gaussian vectors with
zero mean and covariance matrices $\mathrm{E}\left[\mathbf{XX}^{\dagger}\right]=\mathbf{I}$
and $\mathrm{E}\left[\mathbf{WW}^{\dagger}\right]=\sigma^{2}\mathbf{I}$
respectively. For any $\epsilon>0$, construct a random codebook $\mathcal{B}_{\mathbf{X}}=\left\{ \mathbf{X}_{1},\cdots,\mathbf{X}_{K}\right\} $
with $\frac{\log_{2}K}{n}\rightarrow R_{\epsilon}$ and $\frac{1}{n}\left\Vert \mathbf{X}_{k}\right\Vert ^{2}\in\left(1-2\epsilon,1-\epsilon\right)$
for all $k=1,\cdots,K$. 

Now suppose that a codeword $\mathbf{X}_{1}$ is transmitted. We consider
a receiver given by \[
\hat{\mathbf{X}}=\underset{\mathbf{X}\in\mathcal{B}_{\mathbf{X}}}{\arg\;\min}\; d_{c}^{2}\left(\mathcal{P}\left(\mathbf{X}\right),\mathcal{P}\left(\mathbf{Y}\right)\right),\]
where $\mathcal{P}\left(\mathbf{X}\right)\in\mathcal{G}_{n,1}\left(\mathbb{L}\right)$
and $\mathcal{P}\left(\mathbf{Y}\right)\in\mathcal{G}_{n,1}\left(\mathbb{L}\right)$
are planes generated by $\mathbf{X}$ and $\mathbf{Y}$ respectively.
It can be verified that \begin{align*}
\frac{\sigma^{2}}{1+\sigma^{2}-\epsilon}\le\underset{n\rightarrow\infty}{\lim}d_{c}^{2}\left(\mathcal{P}\left(\mathbf{X}_{1}\right),\mathcal{P}\left(\mathbf{Y}\right)\right) & \le\frac{\sigma^{2}}{1+\sigma^{2}-2\epsilon}.\end{align*}
By similar argument to the proof of Theorem \ref{thm:DRF_bounds},
if \[
R_{\epsilon}<\underset{n\rightarrow\infty}{\lim}\frac{\log_{2}K^{*}\left(\frac{\sigma^{2}}{1+\sigma^{2}-2\epsilon}\right)}{n}=\frac{\beta}{2}\log\left(1+\frac{1-2\epsilon}{\sigma^{2}}\right),\]
then $\Pr\left(\hat{\mathbf{X}}\neq\mathbf{X}_{1}\right)=\Pr\left(\exists j\ne1,\; d_{c}^{2}\left(\mathcal{P}\left(\mathbf{X}_{j}\right),\mathcal{P}\left(\mathbf{Y}\right)\right)\le\right.$
$\left.\frac{\sigma^{2}}{1+\sigma^{2}-2\epsilon}\right)\rightarrow0$.
Finally, let $\epsilon\rightarrow0$. The achievable error-free rate
for AWGN channel is then given by $\frac{\beta}{2}\log\left(1+\frac{1}{\sigma^{2}}\right)$,
which is the well known capacity of AWGN channel.

Therefore, transmission in an AWGN channel is essentially communication
on the Grassmann manifold: The decoder in the Grassmann manifold is
asymptotically optimal. Furthermore, based on the proof of Theorem
\ref{thm:DRF_bounds}, the capacity can be geometrically interpreted
as sphere packing in the Grassmann manifold.

\subsection{MIMO Communications with Beamforming Matrix Selection}

The Grassmann manifold also provides a useful analysis tool to MIMO
communications with finite rate feedback on beamforming matrix selection. 

Consider a MIMO systems with $L_{T}$ transmit antennas and $L_{R}$
receive antennas ($L_{R}<L_{T}$ is assumed). Suppose that the transmitter
sends $s$ ($s\le L_{T}$) independent data streams to the receiver.
Let $\mathcal{CN}\left(0,1\right)$ denote the symmetric complex Gaussian
distribution with zero mean and unit variance. Then the received signal
$\mathbf{Y}\in\mathbb{C}^{L_{R}\times1}$ is given by $\mathbf{Y}=\mathbf{HQX}+\mathbf{W}$,
where $\mathbf{H}\in\mathbb{C}^{L_{R}\times L_{T}}$ is the Rayleigh
fading channel state matrix with i.i.d. $\mathcal{CN}\left(0,1\right)$
entries, $\mathbf{Q}\in\mathbb{C}^{L_{T}\times s}$ is the beamforming
matrix satisfying $\mathbf{Q}^{\dagger}\mathbf{Q}=\mathbf{I}$, $\mathbf{X}\in\mathbb{C}^{s\times1}$
is the encoded Gaussian data source with zero mean and covariance
matrix $\frac{\rho}{s}\mathbf{I}$, and $\mathbf{W}\in\mathbb{C}^{L_{R}\times1}$
is the additive Gaussian noise with i.i.d. $\mathcal{CN}\left(0,1\right)$
entries. In our feedback model, we assume that only the receiver knows
channel state $\mathbf{H}$ perfectly. It will help the transmitter
choose a beamforming matrix through a finite rate feedback up to $R_{\mathrm{fb}}$
bits/channel realization. Specifically, A codebook of $\mathbf{Q}$,
say $\mathcal{B}_{\mathbf{Q}}$, satisfying $\left|\mathcal{B}_{\mathbf{Q}}\right|=2^{R_{\mathrm{fb}}}$
is declared to both the transmitter and the receiver. Given a channel
realization, the receiver selects a $\mathbf{Q}$ in $\mathcal{B}_{\mathbf{Q}}$
and feeds the corresponding index back to the transmitter.

The Grassmann manifold is related to throughput analysis of the above
system. Let $\mathbf{H}=\mathbf{U\Lambda V}^{\dagger}$ be the singular
value decomposition of $\mathbf{H}$ where$\mathbf{V}\in\mathbb{C}^{L_{T}\times L_{R}}$
satisfies $\mathbf{V}^{\dagger}\mathbf{V}=\mathbf{I}$. We consider
a suboptimal feedback function: for a given $\mathbf{H}$, the selected
beamforming matrix $\bar{\mathbf{Q}}\in\mathcal{B}_{\mathbf{Q}}$
is given by \[
\bar{\mathbf{Q}}=\underset{\mathbf{Q}\in\mathcal{B}_{\mathbf{Q}}}{\arg\;\min}\; d_{c}^{2}\left(\mathcal{P}\left(\mathbf{V}\right),\mathcal{P}\left(\mathbf{Q}\right)\right)\]
 where $\mathcal{P}\left(\mathbf{V}\right)\in\mathcal{G}_{L_{T},L_{R}}\left(\mathbb{C}\right)$
and $\mathcal{P}\left(\mathbf{Q}\right)\in\mathcal{G}_{L_{T},s}\left(\mathbb{C}\right)$
are planes generated by $\mathbf{V}$ and $\mathbf{Q}$ respectively.
Then the expected throughput $\mathcal{I}$ is upper bounded by \begin{align}
\mathcal{I} & \triangleq\mathrm{E}_{\mathbf{H}}\left[\log\left|\mathbf{I}+\frac{\rho}{s}\mathbf{H}\bar{\mathbf{Q}}\bar{\mathbf{Q}}^{\dagger}\mathbf{H}^{\dagger}\right|\right]\nonumber \\
 & \le L_{R}\cdot\log\left(1+\frac{\rho}{s}\frac{L_{T}}{L_{R}}\mathrm{E}_{\mathbf{V}}\left[\mathrm{tr}\left(\mathbf{V}^{\dagger}\bar{\mathbf{Q}}\bar{\mathbf{Q}}^{\dagger}\mathbf{V}\right)\right]\right).\label{eq:throughput-ubd}\end{align}
It is well known that the matrix $\mathbf{V}$ is isotropically distributed.
Hence, \[
\mathrm{E}_{\mathbf{V}}\left[\mathrm{tr}\left(\mathbf{V}^{\dagger}\bar{\mathbf{Q}}\bar{\mathbf{Q}}^{\dagger}\mathbf{V}\right)\right]=\min\left(s,L_{R}\right)-D\left(\mathcal{B}_{Q}\right),\]
 where $\mathcal{B}_{Q}=\left\{ \mathcal{P}\left(\mathbf{Q}\right):\;\mathbf{Q}\in\mathcal{B}_{\mathbf{Q}}\right\} $
is the codebook generated from $\mathcal{B}_{\mathbf{Q}}$. Based
on the distortion rate bounds (\ref{eq:DRF_bounds}), the bound (\ref{eq:throughput-ubd})
can be quantified for a given feedback rate $R_{\mathrm{fb}}$. 

It is noteworthy that beamforming matrix selection is essentially
unequal dimensional quantization when $s\ne L_{R}$. Similar models,
with minor modifications, have been adopted and explored in many papers.
The $s=L_{R}=1$ case has been studied in \cite{Sabharwal_IT03_Beamforming_MIMO,Love_IT03_Grassman_Beamforming_MIMO,Rao_SP07_Feedback_High_Resolution},
while \cite{Dai_05_Power_onoff_strategy_design_finite_rate_feedback}
discussed a more general equal dimensional quantization where $s\ge1$.
Recently,  \emph{unequal} dimensional quantization ($s=1,\; L_{R}>1$)
received attention for multi-user MIMO communications in \cite{Jindal_ISIT2006_Feedback_Reduction_MIMO_BC}.
Our model can be viewed as a generalization of all these works.

\section{Conclusion\label{sec:Conclusion}}

This paper considers unequal dimensional quantization on the Grassmann
manifold. An explicit volume formula for small balls is derived and
then the rate distortion tradeoff is accurately characterized. The
random codes are proved to be asymptotically optimal with probability
one. As applications of the derived results, a geometric model for
the capacity of AWGN channel is developed, and the effect of beamforming
matrix selection in MIMO systems is discussed.

\bibliographystyle{IEEEtran}
\bibliography{Bib/_love,Bib/_Rao,Bib/FeedbackMIMO_append,Bib/MIMO_basic,Bib/RandomMatrix,Bib/_Jindal,Bib/_Tse,Bib/_Heath,Bib/_Dai}

\end{document}